\numberwithin{equation}{section}
\theoremstyle{plain}
\newtheorem{theorem}{Theorem}
\newtheorem{proposition}{Proposition}
\begin{document}

\begin{center}
  \Large \bf A State-Dependent Dual Risk Model
\end{center}

\author{}
\begin{center}
  Lingjiong Zhu\,\footnote{Department of Mathematics, Florida State University, 1017 Academic Way, Tallahassee, FL-32306, United States of America; Email: zhu@math.fsu.edu; Tel.: 1-850-644-3800; Fax: 1-850-644-4053.
  }
\end{center}

\begin{center}
 \today
\end{center}




\begin{abstract}
In a dual risk model, the premiums are considered as the costs and the claims are regarded
as the profits. The surplus can be interpreted as the wealth of a venture capital, 
whose profits depend on research and development. In most of the existing literature of dual risk models,
the profits follow the compound Poisson model and the cost is constant. In this paper, we develop
a state-dependent dual risk model, in which the arrival rate of the profits and the costs depend
on the current state of the wealth process. Ruin probabilities are obtained in closed-forms. 
Further properties and results will also be discussed.
\end{abstract}


\section{Introduction}

The classic risk model is based on the surplus process $U_{t}^{\ast}=u^{\ast}+\rho^{\ast} t-\sum_{i=1}^{N^{\ast}_{t}}C^{\ast}_{i}$,
where the insurer starts with the initial reserve $u^{\ast}$ and receives the premium at a constant rate $\rho^{\ast}$, 
and $\sum_{i=1}^{N^{\ast}_{t}}C^{\ast}_{i}$ is a compound Poisson process where $N^{\ast}_{t}$
is the number of claims till time $t$ and $C_{i}^{\ast}$ is the size of the $i$-th claim. A central problem is to study the ruin probability that the surplus process will ever hit zero, i.e. $\mathbb{P}(\tau^{\ast}<\infty|U_{0}^{\ast}=u^{\ast})$,
where $\tau^{\ast}=\inf\{t>0:U_{t}^{\ast}\leq 0\}$. In recent years, 
a dual risk model has attracted many attentions, see e.g. equation (1.1) in Afonso et al. \cite{Afonso}, 
in which the surplus process is modeled as
\begin{equation}
U_{t}=u-\rho t+\sum_{i=1}^{N_{t}}C_{i},
\end{equation}
where $C_{i}$ are i.i.d.
positive random variables distributed according to $Q(dc)$ independent
of $N_{t}$, which is a Poisson process with intensity $\lambda$.
We assume $\lambda\mathbb{E}[C_{1}]>\rho$.
The surplus can be interpreted as the wealth of a venture capital, 
whose profits are driven by the investment on research and development. 
The profits are uncertain and modeled as a compound jump process (i.e. $\sum_{i=1}^{N_{t}}C_{i}$)
and the costs are more predictable and are modeled as a deterministic process (i.e. $\rho t$).
The company pays expenses continuously over time for the research and development and gets profits
at random discrete times in the future.
Many properties have been studied for the dual risk model.
The ruin probability $\psi(u)=\mathbb{P}(\tau<\infty|U_{0}=u)$, 
where 
\begin{equation}
\tau=\inf\{t>0:U_{t}\leq 0\}, 
\end{equation}
satisfies
the equation, see e.g. Afonso et al. \cite{Afonso}
\begin{equation}
\psi(u)=e^{-\lambda\frac{u}{\rho}}
+\int_{0}^{\frac{u}{\rho}}\lambda e^{-\lambda t}
\int_{0}^{\infty}\psi(u-\rho t+c)Q(dc)dt.
\end{equation}
It is well known that $\psi(u)=e^{-\bar{\alpha} u}$ where $\bar{\alpha}$ is the unique positive solution to the equation:
\begin{equation}
\lambda\left(\int_{0}^{\infty}e^{-\bar{\alpha} x}Q(dx)-1\right)=-\rho\bar{\alpha}.
\end{equation}
When there is a random delay
for the innovations turned to profits, the dual risk model
becomes time inhomogeneous and the ruin probabilities and the ruin time distributions
are studied in \cite{ZhuDelayed}. 

Avanzi et al. \cite{Avanzi} worked on optimal dividends in the dual risk model where
the wealth process follows a L\'{e}vy process and the optimal strategy is a barrier strategy. 
Albrecher et al. \cite{Albrecher} studied a dual risk model with tax payments.
For general interclaim time distributions and exponentially distributed $C_{i}$'s, an expression
for the ruin probability with tax is obtained in terms of the ruin probability without taxation. 
When the interclaim times are exponential or mixture of exponentials, explicit expressions are obtained.
Ng \cite{Ng} considered a dual model with a threshold dividend strategy, with exponential interclaim times.
Afonso et al. \cite{Afonso} worked on dividend problem in the dual risk model, assuming exponential interclaim times.
They presented a new approach for the calculation of expected discounted dividends.
and studied ruin and dividend probabilities, number of dividends, time to a dividend.
and the distribution for the amount of single dividends.
Avanzi et al. \cite{AvanziII} studied a dividend barrier strategy for the dual risk model 
whereby dividend decisions are made only periodically, but still allow ruin to occur at any time.
Cheung \cite{CheungI} studied the Laplace transform of the time of recovery after default, amongst other concepts
for a dual risk model. 
Cheung and Drekic \cite{CheungII} studied dividend moments in the dual risk model. They derived 
integro-differential equations for the moments of the total discounted dividends which can be solved explicitly 
assuming the jump size distribution has a rational Laplace transform.
Rodr\'{i}guez et al. \cite{RCE} 
worked on a dual risk model with Erlang interclaim times, studied the ruin probability, the Laplace transform
of the time of ruin for generally distributed $C_{i}$'s. They also studied the expected discounted dividends assuming
the profits follow a Phase Type distribution. When the profits are Phase Type distributed, Ng \cite{NgII} also studied
the cross probabilities. Yang and Sendova \cite{YS} studied the Laplace transform of the ruin
time, expected discounted dividends for the Sparre-Andersen dual model.
The dual risk model has also been used in the context of venture capital investments and some optimization problems
have been studied, see e.g. Bayraktar and Egami \cite{BE}.
In Fahim and Zhu \cite{FZ}, they studied the optimal control problem for the dual risk model, which is the minimization
of the ruin probability of the underlying company by optimizing over the investment in 
research and development. They also obtained the asymptotic analysis for the dual risk model in \cite{FZII}.

In this paper, we develop a state-dependent dual risk model. 
In the dual risk model, the surplus can be considered
as the capital of an economic activity such as research and development
where gains are random, at random instants, and costs are predictable; see
e.g. the discussions in \cite{Avanzi,Afonso}.
However, we argue that a state-dependent dual risk model,
in which the jump process and the costs are state-dependent may
better reflect the real world applications.
The innovations of a company may have self-exciting
phenomena, i.e., an innovation or breakthrough will increase the chance of the next innovation and breakthrough.
Also, when the wealth process increases, the company will be in a better shape to innovate and hence the arrival rate
of the profits, may depend on the state of the wealth rather than simply being Poisson. Also, the expenses
that a company pays for research and develop may also increase after the company receive more profits. 
For the high tech and fast-growing companies, the running cost and the revenues of a company
grow in line with the size of the company, see e.g. Table~\ref{Alphabet}, where we considered
the annual total revenues, cost of total revenues and the gross profits\footnote{Gross profit is the difference
between the revenue and the cost of the revenue.} in the years 2018-2021
\footnote{Available at \texttt{https://www.macrotrends.net/stocks/charts/GOOG/alphabet/revenue}.}.
We can see the upward trend of growth for Alphabet. Therefore, for a high tech company
such as Alphabet, the usual constant assumption for running cost, the intensity of profits arrivals
in the dual risk model might be too simplistic. 
On the other hand, for a traditional blue-chip company such as Coca-Cola, 
the annual total revenues, cost of total revenues and the gross profits do not vary 
too much year over year, see e.g. Table~\ref{KO}, where we
considered the annual total revenues, cost of total revenues and the gross profits in the years 2018-2021
\footnote{Available at \texttt{https://www.macrotrends.net/stocks/charts/KO/cocacola/revenue}.}.
That might also be the pattern for a high tech company 
that has already matured and no longer has stellar growth. 
Therefore, the dual risk model in the existing literature might be a good model
when the financials of a company do not change too much over time. 
A state-dependent dual risk model might be more appropriate when
the underlying company has phenomenal growth. 

\begin{table}[htb]
\centering 
\begin{tabular}{|c|c|c|c|c|} 
\hline 
Full Year & 2018 & 2019 & 2020 & 2021 
\\
\hline
Revenue (millions) & \$136,819 & \$161,857 & \$182,527 & \$257,637
\\
Cost of Revenue (millions) & \$59,549 & \$71,896 & \$84,732 & \$110,939
\\
Gross Profit (millions) & \$77,270 & \$89,961 & \$97,795 & \$146,698
\\
\hline 
\end{tabular}
\caption{Revenue and Cost by Alphabet during 2018-2021.}
\label{Alphabet} 
\end{table}

\begin{table}[htb]
\centering 
\begin{tabular}{|c|c|c|c|c|} 
\hline 
Full Year & 2018 & 2019 & 2020 & 2021 
\\
\hline
Revenue (millions) & \$34,300 & \$37,266 & \$33,014 & \$38,655
\\
Cost of Revenue (millions) & \$13,067 & \$14,619 & \$13,433 & \$15,357
\\
Gross Profit (millions) & \$21,233 & \$22,647 & \$19,581 & \$23,298\\
\hline 
\end{tabular}
\caption{Revenue and Cost by Coca-Cola during 2018-2021.}
\label{KO} 
\end{table}

Therefore, it will be reasonable to assume that the costs depend on the state of the wealth of the company. 
Indeed, it is not only possible that the company spends more capital on research and development when the profits increase, 
it is also quite common in the technology industry to increase the capital spending on research
when the company is lagging behind its pairs so that it is fighting for survival and catch-up. 
When we assume that the cost is constant, the wealth process of the company is illustrated Figure~\ref{WealthI}(a) till
the ruin time. If we allow the cost to depend linearly on the wealth, the wealth process of the company
is illustrated in Figure~\ref{WealthI}(b). When the dual risk model uses the classical compound Poisson
as the wealth process, the probability that the company eventually ruins decays exponentially in terms of the initial
wealth of the company. As we will see later in the paper, e.g. Figure~\ref{RuinProb}, by
allowing the costs and arrival rates of the profits depending on the state of the wealth process, the model
becomes much more robust, and the ruin probability can decay superexponentially in terms of the initial wealth, i.e., Figure~\ref{RuinProb}(a), Table~\ref{RuinTable},
and it can also decay polynomially in terms of the initial wealth, i.e., Figure~\ref{RuinProb}(b), Table~\ref{RuinPolyTable}.

We are interested to develop a state-dependent dual risk model, in which both costs and the jump process
are state dependent, in contrast to the existing literature on dual risk models, where the jump process is always
assumed to be a Poisson process.
Let us assume that the wealth process $U_{t}$ satisfies the dynamics
\begin{equation}\label{UDynamics}
dU_{t}=-\eta(U_{t})dt+dJ_{t},\qquad U_{0}=u,
\end{equation}
where $J_{t}=\sum_{i=1}^{N_{t}}C_{i}$ and $N_{t}$ is a simple point process with intensity $\lambda(U_{t-})$
at time $t$. Here, $\eta(\cdot):\mathbb{R}_{\geq 0}\rightarrow\mathbb{R}_{\geq 0}$ 
and $\lambda(\cdot):\mathbb{R}_{\geq 0}\rightarrow\mathbb{R}_{\geq 0}$
are both continuously differentiable. Throughout the paper, unless specified otherwise, we assume
that $C_{i}$ are i.i.d. exponentially distributed with parameter $\gamma>0$. 
While allowing $\eta(\cdot)$ and $\lambda(\cdot)$ to be general, the drawback 
of our model is that we restrict $C_{i}$'s to be exponentially distributed for the paper.
It will be an interesting future research project to investigate generally distributed $C_{i}$'s. 
For the wealth process $U_{t}$ in \eqref{UDynamics}, we will obtain closed-form
expressions for the ruin probability and further properties will also be studied.

It is worth noting that the $U_{t}$ process in \eqref{UDynamics} is an extension
of the Hawkes process with exponential kernel and exponentially distributed jump sizes, (see e.g. \cite{Hawkes,Bremaud}) 
that is, a simple point process $N_{t}$ with intensity 
\begin{equation*}
\lambda\left(ue^{-\beta t}+\sum\nolimits_{i:\tau_{i}<t}C_{i}e^{-\beta(t-\tau_{i})}\right),
\end{equation*}
where $C_{i}$ are i.i.d. exponentially distributed independent of $\mathcal{F}_{\tau_{i}-}$. 
If we let 
\begin{equation*}
U_{t}:=ue^{-\beta t}+\sum\nolimits_{i:0<\tau_{i}<t}C_{i}e^{-\beta(t-\tau_{i})}, 
\end{equation*}
then $U_{t}$
satisfies the dynamics \eqref{UDynamics} with $\eta(u):=\beta u$. 
When $\lambda(\cdot)$ is linear, it is called linear Hawkes process, named after Hawkes \cite{Hawkes}.
The linear Hawkes process can be studied via immigration-birth representation, see e.g. Hawkes and Oakes \cite{HawkesII}.
When $\lambda(\cdot)$ is nonlinear, the Hawkes process is said to be nonlinear and the nonlinear Hawkes process
was first introduced by Br\'{e}maud and Massouli\'{e} \cite{Bremaud}.
The limit theorems for linear and nonlinear Hawkes processes have been studied in e.g. 
\cite{Bacry,Bordenave,ZhuI,ZhuII,ZhuMDP,ZhuCLT,ZhuCIR,Karabash,GZI,GZII,GZIII}.
The applications of Hawkes processes to insurance have been studied in e.g.
\cite{DassiosII,Stabile,ZhuRuin,GZII,ChengSeol}. As a by-product and corollary of the ruin probabilities results
obtained in this paper, the first-passage time for nonlinear Hawkes process with exponential kernel
and exponentially distributed jump sizes is therefore also analytically tractable, which is of independent interest
and is a new contribution to the theory of Hawkes processes.

The paper is organized as follows. In Section~\ref{MainSection}, we will derive
the ruin probability for the wealth process $U_{t}$ in closed-forms in Section~\ref{sec:ruin}, 
and we will also study the moments of the dividends in Section~\ref{sec:expected:div}, 
first and second moments of the wealth process in Section~\ref{sec:first:two:moments}, Laplace transform
of the ruin time in Section~\ref{sec:Laplace} and expected ruin time in Section~\ref{sec:ruin:time}. 
Numerical examples will be provided in Section~\ref{sec:numerical}.
The proofs will be provided in Appendix~\ref{ProofSection}.

\section{Main Results}\label{MainSection}

\subsection{Ruin Probability}\label{sec:ruin}

We obtain the close-form expression for the ruin probability
for the wealth process $U_{t}$ in \eqref{UDynamics} for a state-dependent dual risk model
as follows.

\begin{theorem}\label{RuinThm}
Assume that $\int_{0}^{\infty}\frac{\lambda(v)}{\eta(v)}e^{\gamma v-\int_{0}^{v}\frac{\lambda(w)}{\eta(w)}dw}dv$
exists and is finite.
Then, the ruin probability $\psi(u)=\mathbb{P}(\tau<\infty|U_{0}=u)$ is given by
\begin{equation}\label{ruin:main:formula}
\psi(u)=\mathbb{P}(\tau<\infty)=\frac{\int_{u}^{\infty}\frac{\lambda(v)}{\eta(v)}e^{\gamma v-\int_{0}^{v}\frac{\lambda(w)}{\eta(w)}dw}dv}
{\int_{0}^{\infty}\frac{\lambda(v)}{\eta(v)}e^{\gamma v-\int_{0}^{v}\frac{\lambda(w)}{\eta(w)}dw}dv}.
\end{equation}
\end{theorem}

One way to interpret the formula \eqref{ruin:main:formula} for the ruin probability in Theorem~\ref{RuinThm} is to write 
the formula \eqref{ruin:main:formula} as
\begin{equation}
\psi(u)=\frac{\mathbb{E}[e^{\gamma V}1_{V\geq u}]}{\mathbb{E}[e^{\gamma V}]}=\hat{\mathbb{P}}(V\geq u),
\end{equation}
where $V$ is a positive random variable with probability density function $\frac{\lambda(v)}{\eta(v)}e^{-\int_{0}^{v}\frac{\lambda(w)}{\eta(w)}dw}$,
and $\hat{\mathbb{P}}$ is a probability measure defined via the Radon-Nikodym derivative $\frac{d\hat{\mathbb{P}}}{d\mathbb{P}}=\frac{e^{\gamma V}}{\mathbb{E}[e^{\gamma V}]}$.

Figure~\ref{WealthI} provides illustrations of the wealth process against time
till the time when the company is ruined. In Figure~\ref{WealthI}(a), $\eta(z)$ is a constant and we can
see that the wealth process always decays with the constant rate. In Figure~\ref{WealthI}(b), $\eta(z)$ is linear in $z$, i.e. $\eta(z)=\alpha+\beta z$,
for some $\alpha,\beta>0$ and the wealth process decays exponentially and might get ruined. 
A nonparametric approach to the decay function $\eta(z)$ gives us more flexibility. 

\begin{figure}[htb]
\centering
\subfigure[Constant $\eta(\cdot)$]{
\includegraphics[scale=0.45]{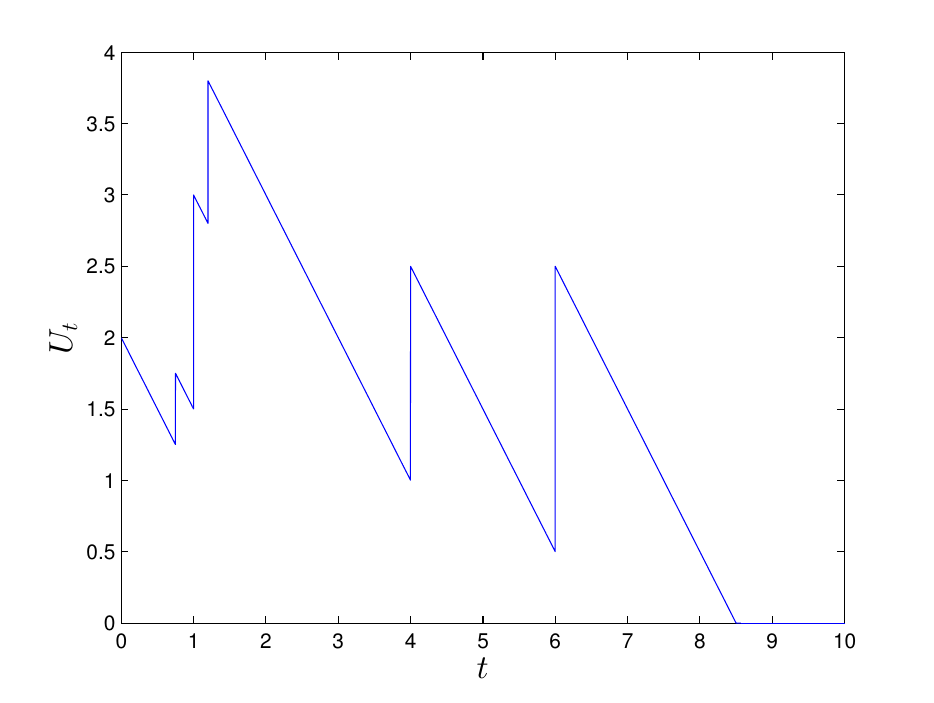}}
\subfigure[Linear $\eta(\cdot)$]{
\includegraphics[scale=0.45]{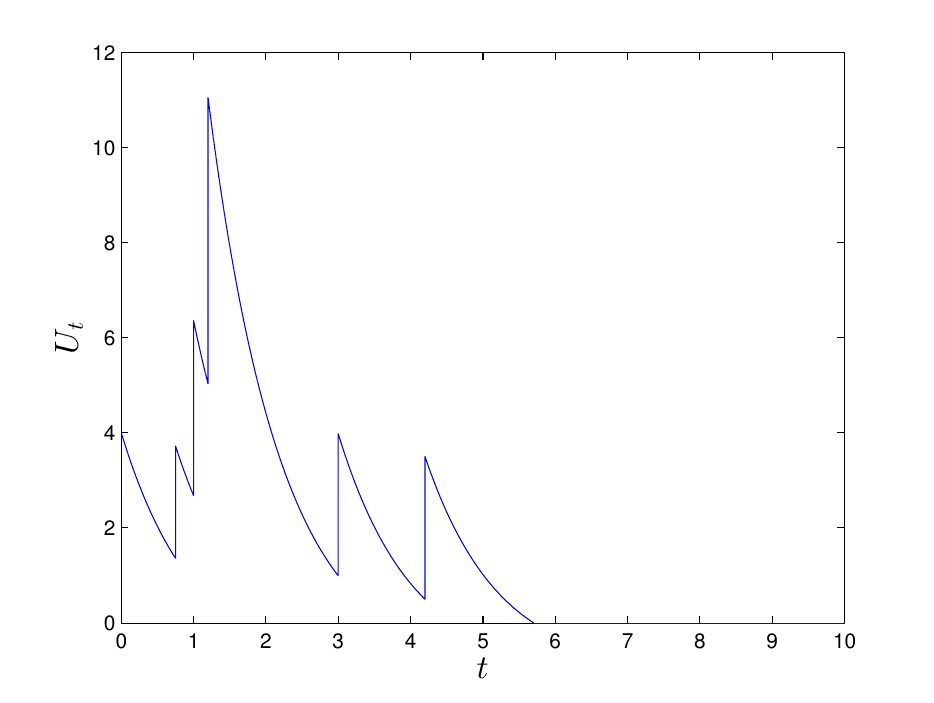}}
\caption{An illustration of the wealth process against time till the company is ruined.}
\label{WealthI}
\end{figure}

\subsection{Moments of the Dividends}\label{sec:expected:div}

One can also study the single dividend payment problem for the state-dependent dual risk model
with the wealth process $U_{t}$ in \eqref{UDynamics}. 
Let $b>U_{0}$ be the barrier
of the dividend. For the first time that the wealth process $U_{t}$ goes above the barrier $b$, say at
the first-passage time $\tau_{b}:=\inf\{t>0:U_{t}\geq b\}$, 
a dividend of the amount $D=U_{\tau_{b}}-b$ is paid out. No dividend is paid out
if the company is ruined before ever hitting the barrier $b$.
We are interested to compute that all the moments of the dividend to be paid out $\mathbb{E}[D^{k}1_{\tau_{b}<\tau}]$
for any $k\in\mathbb{N}$.

Note that under the assumption that the $C_{i}$'s are i.i.d. exponentially distributed with parameter $\gamma>0$,
from the memoryless property of exponential distribution, $U_{\tau_{b}}-b$ is also exponentially
distributed with parameter $\gamma>0$. Therefore, by applying the formula for the $k$-th moment of an exponential random variable, we have
for any $k\in\mathbb{N}$,
\begin{equation}
\mathbb{E}[D^{k}1_{\tau_{b}<\tau}]=\frac{k!}{\gamma^{k}}\mathbb{P}(\tau_{b}<\tau).
\end{equation}
Hence, the problem reduces to compute the probability that the dividend will be paid out before
the company is ruined and we have the following result.

\begin{theorem}\label{DividendThm}
Assume that $\int_{0}^{\infty}\frac{\lambda(v)}{\eta(v)}e^{\gamma v-\int_{0}^{v}\frac{\lambda(w)}{\eta(w)}dw}dv$
exists and is finite.
Then, the probability $\phi(u,b):=\mathbb{P}(\tau_{b}<\tau|U_{0}=u)$ is given by
\begin{equation}
\phi(u,b)=\frac{\int_{0}^{u}\frac{\lambda(v)}{\eta(v)}e^{\gamma v-\int_{0}^{v}\frac{\lambda(w)}{\eta(w)}dw}dv}
{\int_{0}^{\infty}\gamma e^{-\gamma c}\int_{0}^{b+c}\frac{\lambda(v)}{\eta(v)}
e^{\gamma v-\int_{0}^{v}\frac{\lambda(w)}{\eta(w)}dw}dvdc},
\end{equation}
and the $k$-th moment is given by
\begin{equation}
\mathbb{E}[D^{k}1_{\tau_{b}<\tau}]
=\frac{k!}{\gamma^{k}}\frac{\int_{0}^{u}\frac{\lambda(v)}{\eta(v)}e^{\gamma v-\int_{0}^{v}\frac{\lambda(w)}{\eta(w)}dw}dv}
{\int_{0}^{\infty}\gamma e^{-\gamma c}\int_{0}^{b+c}\frac{\lambda(v)}{\eta(v)}
e^{\gamma v-\int_{0}^{v}\frac{\lambda(w)}{\eta(w)}dw}dvdc}.
\end{equation}
\end{theorem}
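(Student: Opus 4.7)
The identity $\mathbb{E}[D\,1_{\tau_{b}<\tau}]=\gamma^{-1}\phi(u,b)$ has already been derived in the paragraph preceding the theorem from the memoryless property of the exponential jump sizes, so the real task is to compute $\phi(u,b)=\mathbb{P}(\tau_{b}<\tau\mid U_{0}=u)$. My plan is to bypass rederiving an integro-differential equation from scratch and instead express $\phi(u,b)$ in terms of the ruin probability $\psi$ of Theorem \ref{RuinThm} by applying the strong Markov property at $\tau_{b}$. Since the sample paths of $U_{t}$ decrease deterministically between jumps and can only jump upward, $\tau_{b}$ is almost surely a jump time, and the events $\{\tau<\tau_{b}\}$ and $\{\tau_{b}<\tau\}$ partition the sample space up to a null set.

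Splitting $\{\tau<\infty\}$ according to this partition and applying the strong Markov property at $\tau_{b}$ gives
\[
\psi(u)=\bigl[1-\phi(u,b)\bigr]+\phi(u,b)\,\mathbb{E}\!\left[\psi(U_{\tau_{b}})\,\big|\,\tau_{b}<\tau\right].
\]
The key ingredient is that on $\{\tau_{b}<\tau\}$ the overshoot $U_{\tau_{b}}-b$ is $\mathrm{Exp}(\gamma)$ and independent of $\mathcal{F}_{\tau_{b}-}$: this is the very memoryless argument already invoked for $D$, since conditional on $U_{\tau_{b}^{-}}=u'<b$ the jump at $\tau_{b}$ is an $\mathrm{Exp}(\gamma)$ variable conditioned to exceed the deterministic threshold $b-u'$, and such a variable has residual distribution $\mathrm{Exp}(\gamma)$. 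Hence $\mathbb{E}[\psi(U_{\tau_{b}})\mid\tau_{b}<\tau]=\int_{0}^{\infty}\psi(b+c)\gamma e^{-\gamma c}dc$, and solving the above for $\phi$ yields
\[
\phi(u,b)=\frac{1-\psi(u)}{1-\int_{0}^{\infty}\psi(b+c)\gamma e^{-\gamma c}dc}.
\]

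It then remains to match this to the formula in the theorem. Let $N(u):=\int_{0}^{u}\frac{\lambda(v)}{\eta(v)}e^{\gamma v-\int_{0}^{v}\lambda(w)/\eta(w)dw}dv$ and $I:=\int_{0}^{\infty}\frac{\lambda(v)}{\eta(v)}e^{\gamma v-\int_{0}^{v}\lambda(w)/\eta(w)dw}dv$, so that by Theorem \ref{RuinThm} one has $1-\psi(u)=N(u)/I$. The numerator of the previous display equals $N(u)/I$, while
\[
1-\int_{0}^{\infty}\psi(b+c)\gamma e^{-\gamma c}dc=\int_{0}^{\infty}[1-\psi(b+c)]\gamma e^{-\gamma c}dc=\frac{1}{I}\int_{0}^{\infty}\gamma e^{-\gamma c}N(b+c)\,dc
\]
is exactly $1/I$ times the denominator stated in the theorem. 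The factors of $I$ cancel, giving the closed form claimed for $\phi(u,b)$, and multiplying by $1/\gamma$ gives the expected dividend formula. The main obstacle is the memorylessness assertion about the overshoot, which in turn rests on $\tau_{b}$ being a jump time (ensured by $\eta\geq 0$, so that the drift cannot cross $b$ upward) and on the conditioning event $\{U_{\tau_{b}-}+C\geq b\}$ depending only on pre-jump data; once these are established everything else reduces to bookkeeping.
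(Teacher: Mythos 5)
Your proof is correct and yields the stated formulas, but it follows a different route from the paper's. The paper never passes through $\psi$: it recycles the harmonic function $f(u)=\int_{0}^{u}\frac{\lambda(v)}{\eta(v)}e^{\gamma v-\int_{0}^{v}\frac{\lambda(w)}{\eta(w)}dw}dv$ with $\mathcal{A}f=0$ from the proof of Theorem \ref{RuinThm} and applies the optional stopping theorem directly at $\tau\wedge\tau_{b}$, obtaining $f(u)=f(0)\,\mathbb{P}(\tau<\tau_{b})+\left(\int_{0}^{\infty}f(b+c)\gamma e^{-\gamma c}dc\right)\mathbb{P}(\tau_{b}<\tau)$ and solving for $\mathbb{P}(\tau_{b}<\tau)$ using $f(0)=0$. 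You instead apply the strong Markov property at $\tau_{b}$ and express everything through the already-proved formula for $\psi$; after substituting $\psi=1-N/I$, your identity $\psi(u)=1-\phi(u,b)+\phi(u,b)\int_{0}^{\infty}\psi(b+c)\gamma e^{-\gamma c}dc$ is an affine rearrangement of the paper's optional-stopping identity, so the two arguments are mathematically equivalent in content, with the harmonicity of $N$ entering your version only implicitly through Theorem \ref{RuinThm}. Both hinge on the same key probabilistic fact, the $\mathrm{Exp}(\gamma)$ overshoot at $\tau_{b}$, which you justify more carefully than the paper does. One caveat you share with the paper: the claim that $\{\tau<\tau_{b}\}$ and $\{\tau_{b}<\tau\}$ exhaust the sample space needs $\mathbb{P}(\tau\wedge\tau_{b}=\infty)=0$, i.e., that the process cannot linger in $(0,b)$ forever; your monotonicity remark only gives disjointness. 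Since the paper makes the identical unstated assumption when it evaluates $\mathbb{E}_{u}[f(U_{\tau\wedge\tau_{b}})]$ as a two-point mixture, this is not a defect of your argument relative to the paper's, but it is the one step left unproved in both.
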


In Theorem~\ref{DividendThm}, a single dividend payment is considered. 
One can also study multiple dividend payments as follows:
\begin{equation}
\tau_{b}^{(1)}:=\inf\{t>0:U_{t}>b\},
\qquad
\tau_{b}^{(i)}:=\inf\{t>\tau_{b}^{(i)}:U_{t}>b\},\qquad i\geq 2.
\end{equation}
Then $\tau_{b}^{(i)}$ is the $i$th payment of the dividend if $\tau_{b}^{(i)}<\tau$.
Let $N$ be the total number of dividends to be paid out before the ruin occurs and 
$\sum_{i=1}^{N}D_{i}$ be the total value of dividends to be paid out before the ruin occurs.
Recall $\phi(u,b)=\mathbb{P}(\tau_{b}<\tau|U_{0}=u)$ with closed-form formulas computed in Theorem~\ref{DividendThm}.
It is easy to see that
\begin{align}
&\mathbb{P}(N=0)=1-\phi(u,b),
\\
&\mathbb{P}(N=n)=\phi(u,b)\phi(b,b)^{n-1}(1-\phi(b,b)),\qquad n\geq 1.
\end{align}
Therefore, 
\begin{equation}
\mathbb{E}\left[\sum_{i=1}^{N}D_{i}\right]
=\frac{1}{\gamma}\mathbb{E}[N]
=\frac{1}{\gamma}\frac{\phi(u,b)}{1-\phi(b,b)}.
\end{equation}
One can also compute the Laplace transform of the total amount of dividends to be paid out.
For any $\theta>0$ such that $\frac{\gamma}{\gamma+\theta}\phi(b,b)<1$, we have
\begin{align}
\mathbb{E}\left[e^{-\theta\sum_{i=1}^{N}D_{i}}\right]
&=\mathbb{E}\left[\mathbb{E}\left[e^{-\theta\sum_{i=1}^{N}D_{i}}\big|N\right]\right]
\\
&=\mathbb{E}\left[\left(\frac{\gamma}{\gamma+\theta}\right)^{N}\right]
\nonumber
\\
&=1-\phi(u,b)+\phi(u,b)\sum_{n=1}^{\infty}\left(\frac{\gamma}{\gamma+\theta}\right)^{n}\phi(b,b)^{n-1}(1-\phi(b,b))
\nonumber
\\
&=1-\phi(u,b)+\phi(u,b)(1-\phi(b,b))\frac{\frac{\gamma}{\gamma+\theta}}{1-\frac{\gamma}{\gamma+\theta}\phi(b,b)}.
\nonumber
\end{align}

\subsection{First and Second Moments of the Wealth Process}\label{sec:first:two:moments}

We are also interested to study the first and second moments of the wealth process $U_{t}$ given in \eqref{UDynamics}.
Note that since the wealth process is defined only up to the ruin time $\tau$, we should evaluate
$\mathbb{E}[U_{t\wedge\tau}]$ and $\mathbb{E}[U_{t\wedge\tau}^{2}]$, which in general is a challenge
to compute since it will require us to know explicitly the distribution of the ruin time. 
We derive the first and second moments of the wealth process $U_{t}$ for a special case instead.
Let $\eta(u)\equiv\rho+\mu u$ and $\lambda(u)=\alpha+\beta u$, for some $\alpha,\beta\geq 0$, i.e. 
\begin{equation}
dU_{t}=-(\rho+\mu U_{t})dt+dJ_{t}
\end{equation}
where $J_{t}=\sum_{i=1}^{N_{t}}C_{i}$, 
where $N_{t}$ is a simple point process
with intensity $\lambda(U_{t-})=\alpha+\beta U_{t-}$  and $C_{i}$ are i.i.d.
with distribution $Q(dc)$.

In this case, $\tau\geq T_{0}$, where $T_{0}$ is the time that the ODE
\begin{equation}
du_{t}=-(\rho+\mu u_{t})dt,\qquad u_{t}=u,
\end{equation}
hits zero. It is easy to solve the above ODE and get
\begin{equation}
u_{t}=\left(\frac{\rho}{\mu}+u\right)e^{-\mu t}-\frac{\rho}{\mu},
\qquad
T_{0}=\frac{1}{\mu}\log\left(1+\frac{\mu u}{\rho}\right).
\end{equation}
Then, for any $t<T_{0}$, $t\wedge\tau=t$.
We have the following proposition that computes
the first two moments of the wealth process $U_{t}$.

\begin{proposition}\label{FirstSecondProp}
For any $t<\frac{1}{\mu}\log\left(1+\frac{\mu u}{\rho}\right)$,
\begin{equation}
\mathbb{E}[U_{t}]=\left(\frac{\rho-\alpha\mathbb{E}^{Q}[c]}{\mu-\beta\mathbb{E}^{Q}[c]}+u\right)e^{-(\mu-\beta\mathbb{E}^{Q}[c])t}
-\frac{\rho-\alpha\mathbb{E}^{Q}[c]}{\mu-\beta\mathbb{E}^{Q}[c]},
\end{equation}
and
\begin{align}
\mathbb{E}[U_{t}^{2}]
&=ue^{-2\beta(\mathbb{E}^{Q}[c]-\mu)t}
+\alpha\mathbb{E}^{Q}[c^{2}]
\frac{1-e^{-2\beta(\mathbb{E}^{Q}[c]-\mu)t}}{2\beta(\mathbb{E}^{Q}[c]-\mu)}
\nonumber
\\
&\qquad
-(2(\alpha\mathbb{E}^{Q}[c]-\rho)+\beta\mathbb{E}[c^{2}])\frac{\rho-\alpha\mathbb{E}^{Q}[c]}{\mu-\beta\mathbb{E}^{Q}[c]}
\frac{1-e^{-2\beta(\mathbb{E}^{Q}[c]-\mu)t}}{2\beta(\mathbb{E}^{Q}[c]-\mu)}
\nonumber
\\
&\qquad
+(2(\alpha\mathbb{E}^{Q}[c]-\rho)+\beta\mathbb{E}[c^{2}])\left(\frac{\rho-\alpha\mathbb{E}^{Q}[c]}{\mu-\beta\mathbb{E}^{Q}[c]}+u\right)
\frac{e^{-\beta(\mathbb{E}^{Q}[c]-\mu)t}-e^{-2\beta(\mathbb{E}^{Q}[c]-\mu)t}}{\beta(\mathbb{E}^{Q}[c]-\mu)}.
\nonumber
\end{align}
\end{proposition}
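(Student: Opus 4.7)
The plan is to bypass the ruin-time stopping by working on the interval $[0,T_0)$ where ruin cannot occur, apply Dynkin's formula to the polynomial test functions $f(u)=u$ and $f(u)=u^2$ to convert the problem into two coupled scalar linear ODEs, and solve these in closed form via the integrating factor method.

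Since all jumps $dJ_t$ are nonnegative, a pathwise comparison with the deterministic ODE $\dot{u}_t = -(\rho + \mu u_t)$, $u_0 = u$, gives $U_t \ge u_t$ for every $t \in [0,\tau)$. Because $u_t > 0$ on $[0,T_0)$, this forces $\tau \ge T_0$ almost surely, so $t\wedge\tau = t$ for $t < T_0$ (as already noted in the discussion preceding the proposition) and we may treat $U_t$ as an unstopped marked point process. The infinitesimal generator of $(U_t)$ is
\begin{equation*}
\mathcal{A}f(u) = -(\rho+\mu u)\, f'(u) + (\alpha+\beta u)\int_0^\infty [f(u+c)-f(u)]\, Q(dc).
\end{equation*}
Applying $\mathcal{A}$ to $f(u)=u$ yields $\mathcal{A}f(u) = -(\rho-\alpha\mathbb{E}^Q[c]) - (\mu-\beta\mathbb{E}^Q[c])\, u$, so Dynkin's formula (justified, if necessary, by first stopping at $\tau_N = \inf\{s: U_s \ge N\}$ and then passing $N \to \infty$ using a Gronwall bound on the resulting moment ODEs) shows that $m(t)=\mathbb{E}[U_t]$ solves
\begin{equation*}
m'(t) = -(\mu-\beta\mathbb{E}^Q[c])\, m(t) - (\rho-\alpha\mathbb{E}^Q[c]), \qquad m(0)=u,
\end{equation*}
whose solution by integrating factor $e^{(\mu-\beta\mathbb{E}^Q[c])t}$ is precisely the claimed expression for $\mathbb{E}[U_t]$.

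For the second moment, applying $\mathcal{A}$ to $f(u)=u^2$ gives
\begin{equation*}
\mathcal{A}f(u) = -2(\mu-\beta\mathbb{E}^Q[c])\, u^2 + \bigl(2(\alpha\mathbb{E}^Q[c]-\rho) + \beta\mathbb{E}^Q[c^2]\bigr)\, u + \alpha\mathbb{E}^Q[c^2],
\end{equation*}
so Dynkin yields a linear ODE for $v(t)=\mathbb{E}[U_t^2]$ whose right-hand side is $-2(\mu-\beta\mathbb{E}^Q[c])\, v(t)$ plus an affine function of $m(t)$. Substituting the known $m(t)$ from the previous step, the forcing becomes a linear combination of $1$ and a single exponential $e^{-(\mu-\beta\mathbb{E}^Q[c])t}$; multiplying by the integrating factor $e^{2(\mu-\beta\mathbb{E}^Q[c])t}$ and integrating these two modes separately produces the four-term closed form in the statement. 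The main (and essentially the only) obstacle is bookkeeping: two distinct decay rates $A := \mu-\beta\mathbb{E}^Q[c]$ and $2A$ enter the answer through the forcing term, and one must carefully group the contributions with denominators $A$ and $2A$ and exponentials $e^{-At}$ and $e^{-2At}$ so that they match the claimed expression verbatim.
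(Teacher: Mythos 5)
Your proposal is correct and follows essentially the same route as the paper: apply the generator to $f(u)=u$ and $f(u)=u^{2}$, use Dynkin's formula to obtain first-order linear ODEs for $\mathbb{E}[U_{t}]$ and $\mathbb{E}[U_{t}^{2}]$, and solve them by integrating factors; your pathwise-comparison argument for $\tau\geq T_{0}$ and the localization at $\tau_{N}$ are rigor refinements that the paper simply omits. One caveat on your last step: carrying out the bookkeeping with your rates $A=\mu-\beta\mathbb{E}^{Q}[c]$ and $2A$ yields exponentials $e^{-At}$, $e^{-2At}$ and leading term $u^{2}e^{-2At}$, so the result will \emph{not} match the displayed second-moment formula verbatim --- the paper's display appears to contain typos (writing $e^{-2\beta(\mathbb{E}^{Q}[c]-\mu)t}$ where $e^{2(\beta\mathbb{E}^{Q}[c]-\mu)t}$ is meant, and $u$ where $u^{2}$ is meant), which you should note rather than claim to reproduce exactly.
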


\subsection{Laplace Transform of Ruin Time}\label{sec:Laplace}

In the ruin theory of the dual risk models, it is of great interest to study the Laplace transform
of the ruin time,
\begin{equation}
\psi(u,\delta)=\mathbb{E}\left[e^{-\delta\tau}1_{\tau<\infty}\right],
\end{equation}
where $\delta>0$, and in this section, we aim to derive the Laplace transform
for the ruin time for the wealth process $U_{t}$ in \eqref{UDynamics}.
Note that $\psi(u,\delta)$ can also be interpreted as a perpetual digit option, 
with payoff $1$ dollar at the time of ruin, with discount coefficient $\delta>0$, which can be taken as the risk-free rate.

\begin{theorem}\label{LaplaceThm}
Assume that the equation 
\begin{align}
&\lambda(u)\eta(u)f''(u)
+[\lambda(u)\eta'(u)+\lambda^{2}(u)-\gamma\eta(u)\lambda(u)-\lambda'(u)\eta(u)+\delta\lambda(u)]f'(u)\label{f:eqn}
\\
&\qquad\qquad\qquad\qquad
-(\gamma\lambda(u)+\lambda'(u))\delta f(u)=0
\nonumber
\end{align}
has a uniformly bounded positive solution $f(u)$ that satisfies $f(\infty)=0$. 
Then, we have $\psi(u,\delta)=f(u)/f(0)$.
\end{theorem}

\subsection{Expected Ruin Time}\label{sec:ruin:time}

Previously, we computed the ruin probability $\mathbb{P}(\tau<\infty)$ in Theorem~\ref{RuinThm} under certain technical assumptions.
Note that when $\mathbb{P}(\tau<\infty)<1$, we have $\mathbb{E}[\tau]=\infty$. 
In the case that the ruin occurs with probability one, i.e., $\mathbb{P}(\tau<\infty)=1$, we can also 
compute that expected time that the ruin occurs as follows. 

\begin{theorem}\label{ExpectedThm}
Assume that $\mathbb{P}(\tau<\infty)=1$ and let us define
\begin{align}
f(u)&:=\int_{0}^{u}\frac{\lambda(v)}{\eta(v)}
\int_{0}^{v}[-\lambda'(w)-\gamma\lambda(w)]\frac{1}{\lambda(w)^{2}}
e^{\gamma(v-w)-\int_{w}^{v}\frac{\lambda(r)}{\eta(r)}dr}dwdv
\\
&\qquad
+g(0)\int_{0}^{u}\frac{\eta(0)}{\eta(v)}\frac{\lambda(v)}{\lambda(0)}e^{\gamma v}
e^{-\int_{0}^{v}\frac{\lambda(w)}{\eta(w)}dw}dv,
\nonumber
\end{align}
where
\begin{equation}\label{g0}
g(0):=\frac{1+\lambda(0)\int_{0}^{\infty}\int_{0}^{c}\frac{\lambda(v)}{\eta(v)}
\int_{0}^{v}[-\lambda'(w)-\gamma\lambda(w)]\frac{1}{\lambda(w)^{2}}
e^{\gamma(v-w)-\int_{w}^{v}\frac{\lambda(r)}{\eta(r)}dr}\gamma e^{-\gamma(c-u)}dwdvdc}
{\eta(0)-\lambda(0)\int_{0}^{\infty}\int_{0}^{c}\frac{\eta(0)}{\eta(v)}\frac{\lambda(v)}{\lambda(0)}e^{\gamma v}
e^{-\int_{0}^{v}\frac{\lambda(w)}{\eta(w)}dw}\gamma e^{-\gamma(c-u)}dvdc}.
\end{equation}
Assume that $\sup_{0<u<\infty}f(u)<\infty$. Then, $\mathbb{E}[\tau]=f(u)$.
\end{theorem}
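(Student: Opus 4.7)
The plan is to identify $f(u) := \mathbb{E}_u[\tau]$ as the solution of an inhomogeneous integro-differential equation coming from the infinitesimal generator of $U_t$, then reduce this equation to an explicit linear ODE whose solution is the formula in the statement. Since $C_i$ is exponential with rate $\gamma$, the generator acts on bounded smooth functions by
\begin{equation*}
\mathcal{A}f(u) = -\eta(u)f'(u) + \lambda(u)\int_0^\infty[f(u+c)-f(u)]\gamma e^{-\gamma c}\,dc.
\end{equation*}
Because $U$ decreases continuously and only jumps upward, on $\{\tau<\infty\}$ one has $U_\tau=0$, so a bounded solution of $\mathcal{A}f\equiv -1$ with $f(0)=0$ satisfies $\mathbb{E}_u[\tau]=f(u)$ by Dynkin's formula (together with $\mathbb{P}(\tau<\infty)=1$). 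The task is thus to construct such an $f$.

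To convert the integro-differential equation into a genuine ODE, introduce $F(u):=\int_0^\infty f(u+c)\gamma e^{-\gamma c}\,dc$. Writing $F(u)=\gamma e^{\gamma u}\int_u^\infty f(y)e^{-\gamma y}\,dy$ yields the key identity $F'(u)=\gamma[F(u)-f(u)]$, so $\mathcal{A}f=-1$ is equivalent to $-\eta(u)f'(u)+\frac{\lambda(u)}{\gamma}F'(u)=-1$. Differentiating once more and eliminating $F''$ via $F''=\gamma(F'-f')$ and $F'=\gamma(-1+\eta f')/\lambda$ produces, for $g:=f'$, the first-order linear ODE
\begin{equation*}
g'(u)+g(u)\left[\frac{\eta'(u)}{\eta(u)}-\frac{\lambda'(u)}{\lambda(u)}-\gamma+\frac{\lambda(u)}{\eta(u)}\right] = -\frac{1}{\eta(u)}\left[\frac{\lambda'(u)}{\lambda(u)}+\gamma\right].
\end{equation*}

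Next I would integrate using the integrating factor $\mu(u)=\frac{\eta(u)\lambda(0)}{\lambda(u)\eta(0)}e^{-\gamma u+\int_0^u\lambda(v)/\eta(v)\,dv}$ normalised so that $\mu(0)=1$. The resulting expression for $g$ is the sum of a particular solution, which after one integration over $[0,u]$ reproduces the double integral term in the theorem, and a homogeneous solution $g(0)/\mu(u)$, whose integral over $[0,u]$ yields the second term multiplied by $g(0)$. Together with $f(0)=0$, this gives the stated two-term formula for $f$ up to the single free constant $g(0)=f'(0)$. The constant is pinned down by evaluating the integral equation at $u=0$, which reads $-\eta(0)g(0)+\lambda(0)F(0)=-1$; since $F(0)=\int_0^\infty f(c)\gamma e^{-\gamma c}\,dc$ is linear in $g(0)$ via the explicit formula just derived, this collapses to a single scalar equation whose solution is \eqref{g0}.

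The main obstacle is the last verification step. Once the candidate $f$ is in hand, direct substitution (reversing the derivation, including the boundary evaluation at $u=0$) confirms $\mathcal{A}f\equiv -1$ and $f(0)=0$. The delicate point is passing to the limit in Dynkin's identity $\mathbb{E}_u[f(U_{t\wedge\tau})+(t\wedge\tau)]=f(u)$: the hypothesis $\sup_u f(u)<\infty$ lets dominated convergence handle the $f(U_{t\wedge\tau})$ term, monotone convergence together with $\mathbb{P}(\tau<\infty)=1$ gives $\mathbb{E}[t\wedge\tau]\uparrow\mathbb{E}[\tau]$, and $U_\tau=0$ closes out $\mathbb{E}[f(U_\tau)]=f(0)=0$, yielding $\mathbb{E}_u[\tau]=f(u)$. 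A secondary technicality is that well-definedness of \eqref{g0} requires the denominator to be nonzero; one expects this to be inherited from the integrability hypothesis ensuring finiteness of the ruin-probability expression in Theorem \ref{RuinThm}.
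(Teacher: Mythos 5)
Your proposal is correct and follows essentially the same route as the paper: reduce $\mathcal{A}f=-1$ to the same first-order linear ODE for $g=f'$ (your identity $F'=\gamma(F-f)$ is just a tidier packaging of the paper's differentiate-and-substitute step), solve with the same integrating factor, fix $g(0)$ from the equation at $u=0$, and close with Dynkin's formula plus bounded/monotone convergence. Your explicit remarks that $U_\tau=0$ (so $f(U_\tau)=f(0)=0$) and that the denominator of \eqref{g0} must be nonzero are points the paper leaves implicit, but they do not change the argument.
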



\begin{figure}[htb]
\centering
\subfigure[$\lambda(u)=(\alpha u^{\beta}+\gamma)\eta(u)$]{
\includegraphics[scale=0.45]{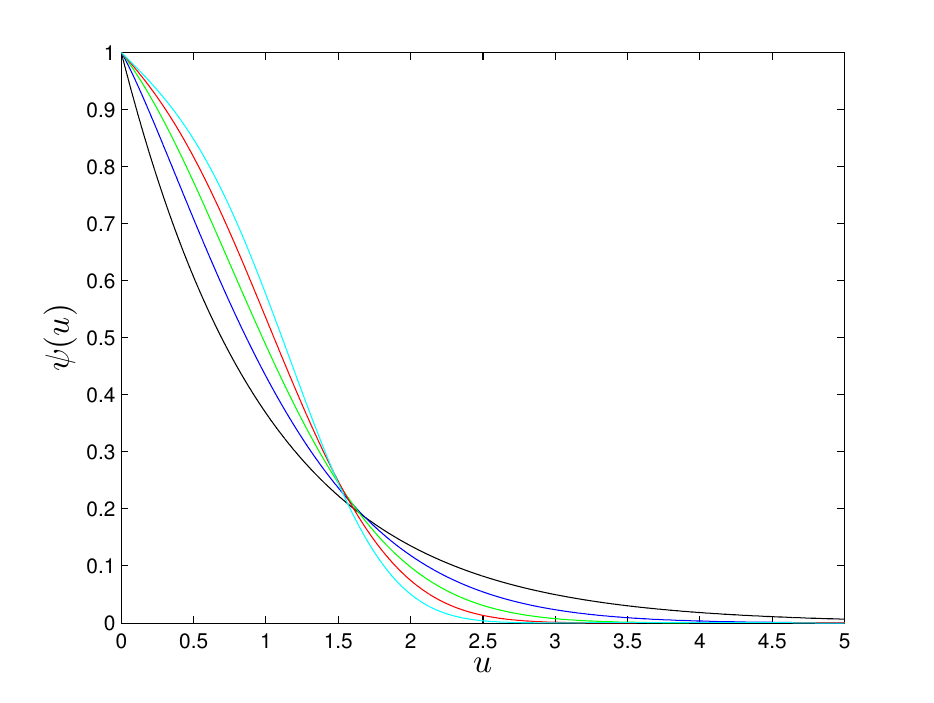}}
\subfigure[$\lambda(u)=(\gamma+\frac{\beta}{1+u})\eta(u)$]{
\includegraphics[scale=0.45]{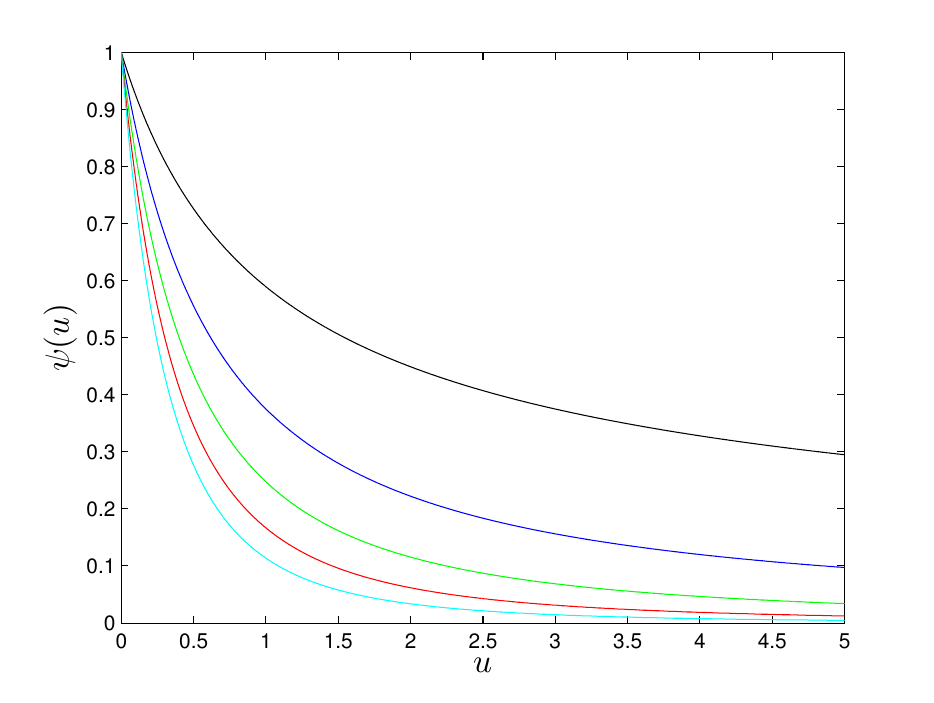}}
\caption{(a) Illustration of the ruin probability $\psi(u)$ against the initial wealth $u$ when
$\lambda(u)=(\alpha u^{\beta}+\gamma)\eta(u)$. The black, blue, green, red and cion lines denote the cases
when $\beta=0.0$, $0.5$, $1.0$, $1.5$ and $2.0$. The $\alpha$ and $\gamma$ are fixed to be $1.0$.
We can see from the plot that when $\beta=0.0$, the ruin probability exponentially decays in the initial wealth. 
Otherwise, the shape of decay is not exponential.
(b) Illustration of the ruin probability $\psi(u)$ against the initial wealth $u$ when
$\lambda(u)=(\gamma+\frac{\beta}{1+u})\eta(u)$. The black, blue, green, red and cion lines denote the cases
when $\beta=1.5$, $2.0$, $2.5$, $3.0$ and $3.5$. The $\gamma$ is fixed to be $1.0$.
The ruin probability decays polynomially against the initial wealth.}
\label{RuinProb}
\end{figure}

\begin{table}[htb]
\centering 
\begin{tabular}{|c|c|c|c|c|c|} 
\hline 
$\psi(u)$ & $u=1$ & $u=2$ & $u=3$ & $u=4$ & $u=5$ 
\\
\hline
$\beta=0.0$ & 0.3679 & 0.1353 & 0.0498 & 0.0183 & 0.0067
\\
$\beta=0.5$ & 0.4325 & 0.1184 & 0.0233 & 0.0035 & 0.0004
\\
$\beta=1.0$ & 0.4867 & 0.0981 & 0.0076 & 0.0002 & 0.0000
\\
$\beta=1.5$ & 0.5343 & 0.0747 & 0.0013 & 0.0000 & 0.0000
\\
$\beta=2.0$ & 0.5756 & 0.0506 & 0.0001 & 0.0000 & 0.0000
\\
\hline 
\end{tabular}
\caption{Illustration of the ruin probability $\psi(u)$ when $\lambda(u)=(\alpha u^{\beta}+\gamma)\eta(u)$ for fixed $\alpha=\gamma=1.0$ and varying $\beta$ and $u$.}
\label{RuinTable} 
\end{table}

\begin{table}[htb]
\centering 
\begin{tabular}{|c|c|c|c|c|c|} 
\hline 
$\psi(u)$ & $u=1$ & $u=2$ & $u=3$ & $u=4$ & $u=5$ 
\\
\hline
$\beta=1.5$ & 0.5893 & 0.4491 & 0.3750 & 0.3280 & 0.2948
\\
$\beta=2.0$ & 0.3750 & 0.2222 & 0.1562 & 0.1200 & 0.0972
\\
$\beta=2.5$ & 0.2475 & 0.1155 & 0.0688 & 0.0465 & 0.0340
\\
$\beta=3.0$ & 0.1667 & 0.0617 & 0.0312 & 0.0187 & 0.0123
\\
$\beta=3.5$ & 0.1136 & 0.0336 & 0.0145 & 0.0077 & 0.0046
\\
\hline 
\end{tabular}
\caption{Illustration of the ruin probability $\psi(u)$ when $\lambda(u)=(\gamma+\frac{\beta}{1+u})\eta(u)$ for fixed $\gamma=1.0$ and varying $\beta$ and $u$.}
\label{RuinPolyTable} 
\end{table}

\subsection{Numerical Examples}\label{sec:numerical}

In this section, we first illustrate the ruin probability $\psi(u)$ obtained in Theorem~\ref{RuinThm}
by some numerical examples. 

First, we consider the example $\lambda(u)=(\alpha u^{\beta}+\gamma)\eta(u)$, for some constants $\alpha,\beta>0$. 
In this example, the ruin probability in \eqref{ruin:main:formula} can be explicitly computed as
\begin{align}
\psi(u)=\frac{\int_{u}^{\infty}(\alpha v^{\beta}+\gamma)e^{-\frac{\alpha}{\beta+1}v^{\beta+1}}dv}
{\int_{0}^{\infty}(\alpha v^{\beta}+\gamma)e^{-\frac{\alpha}{\beta+1}v^{\beta+1}}dv}
=\frac{e^{-\frac{\alpha}{\beta+1}u^{\beta+1}}
+\frac{\gamma}{\beta+1}(\frac{\alpha}{\beta+1})^{-\frac{1}{\beta+1}}
\Gamma(\frac{1}{\beta+1},\frac{\alpha u^{\beta+1}}{\beta+1})}
{1
+\frac{\gamma}{\beta+1}(\frac{\alpha}{\beta+1})^{-\frac{1}{\beta+1}}
\Gamma(\frac{1}{\beta+1},0)},
\end{align}
where $\Gamma(s,x):=\int_{x}^{\infty}t^{s-1}e^{-t}dt$ is the incomplete gamma function.
The summary statistics of the ruin probability $\psi(u)$
for the case $\lambda(u)=(\alpha u^{\beta}+\gamma)\eta(u)$ with fixed $\alpha=\gamma=1.0$
and $\beta=0.0$, $0.5$, $1.0$, $1.5$ and $2.0$ are given in Figure~\ref{RuinProb}(a)
and Table~\ref{RuinTable}.

Next, we consider the example $\lambda(u)=(\gamma+\frac{\beta}{1+u})\eta(u)$, for some constant $\beta>1$. 
In this example, the ruin probability in \eqref{ruin:main:formula} can be explicitly computed as
\begin{align}
\psi(u)=\frac{\int_{u}^{\infty}(\gamma+\frac{\beta}{1+v})e^{-\beta\log(v+1)}dv}
{\int_{0}^{\infty}(\gamma+\frac{\beta}{1+v})e^{-\beta\log(v+1)}dv}
=\frac{\gamma}{\gamma+\beta-1}\frac{1}{(1+u)^{\beta-1}}+\frac{\beta-1}{\gamma+\beta-1}\frac{1}{(1+u)^{\beta}}.
\end{align}
The summary statistics of the ruin probability $\psi(u)$
for the case $\lambda(u)=(\gamma+\frac{\beta}{1+u})\eta(u)$ with fixed $\gamma=1.0$ and
$\beta=1.5$, $2.0$, $2.5$, $3.0$ and $3.5$ are given in Figure~\ref{RuinProb}(b)
and Table~\ref{RuinPolyTable}. As we can see from Figure~\ref{RuinProb}(a)
and Figure~\ref{RuinProb}(b), the shape of the ruin probability $\psi(u)$ in terms
of the initial wealth $u$ is not necessarily exponential. It exhibits a rich class of behaviors
as we vary the parameter $\beta$. Therefore, the state-dependent dual risk model we have built
is much more flexible and robust than many of the classical dual risk models in the literature.


Next, we illustrate the $k$-th moment of the dividend payment $\mathbb{E}[D^{k}1_{\tau_{b}<\tau}]$
in Theorem~\ref{DividendThm} by some numerical examples. 
It follows from Theorem~\ref{DividendThm} that for any $k\in\mathbb{N}$, 
\begin{equation}\label{obtain:1}
\mathbb{E}[D^{k}1_{\tau_{b}<\tau}]=\frac{k!f(u)}{\gamma^{k}\int_{0}^{\infty}f(b+c)\gamma e^{-\gamma c}dc},
\quad
\text{where}
\quad
f(u):=\int_{0}^{u}\frac{\lambda(v)}{\eta(v)}e^{\gamma v-\int_{0}^{v}\frac{\lambda(w)}{\eta(w)}dw}dv.
\end{equation}
Consider the example $\lambda(u)=(\gamma+\frac{\beta}{1+u})\eta(u)$, for some constant $\beta>1$. 
Then, we can compute that
\begin{equation}\label{obtain:2}
f(u)
=\frac{\gamma+\beta-1}{\beta-1}-\frac{\gamma}{\beta-1}\frac{1}{(1+u)^{\beta-1}}
-\frac{1}{(1+u)^{\beta}},
\end{equation}
and
\begin{equation}\label{plugging:1}
\int_{0}^{\infty}f(b+c)\gamma e^{-\gamma c}dc
=\frac{\gamma+\beta-1}{\beta-1}-\frac{\gamma}{\beta-1}\int_{0}^{\infty}\frac{\gamma e^{-\gamma c}}{(1+b+c)^{\beta-1}}dc
-\int_{0}^{\infty}\frac{\gamma e^{-\gamma c}}{(1+b+c)^{\beta}}dc.
\end{equation}
Applying integration by parts formula, we have
\begin{align}
\int_{0}^{\infty}\frac{\gamma e^{-\gamma c}}{(1+b+c)^{\beta}}dc
&=\frac{1}{1-\beta}\int_{0}^{\infty}\gamma e^{-\gamma c}d\left(\frac{1}{(1+b+c)^{\beta-1}}\right)
\nonumber
\\
&=\frac{1}{\beta-1}\frac{\gamma}{(1+b)^{\beta-1}}-\frac{1}{\beta-1}\int_{0}^{\infty}\frac{\gamma^{2}e^{-\gamma c}dc}{(1+b+c)^{\beta-1}}.\label{plugging:2}
\end{align}
By plugging \eqref{plugging:2} into \eqref{plugging:1}, we get
\begin{equation}\label{obtain:3}
\int_{0}^{\infty}f(b+c)\gamma e^{-\gamma c}dc
=\frac{\gamma+\beta-1}{\beta-1}-\frac{1}{\beta-1}\frac{\gamma}{(1+b)^{\beta-1}}.
\end{equation}

We illustrate the $k$-th moment of the dividend payment $\mathbb{E}[D^{k}1_{\tau_{b}<\tau}]$
for the case $\lambda(u)=(\gamma+\frac{\beta}{1+u})\eta(u)$
with fixed $\gamma=1.0$, $b=6.0$ and
$\beta=1.5$, $2.0$, $2.5$, $3.0$ and $3.5$ and $k=1,2$. 
It follows from \eqref{obtain:1}, \eqref{obtain:2} and \eqref{obtain:3} that
\begin{equation}
\mathbb{E}[D^{k}1_{\tau_{b}<\tau}]=\frac{k!}{\gamma^{k}}\frac{\frac{\gamma+\beta-1}{\beta-1}-\frac{\gamma}{\beta-1}\frac{1}{(1+u)^{\beta-1}}
-\frac{1}{(1+u)^{\beta}}}
{\frac{\gamma+\beta-1}{\beta-1}-\frac{1}{\beta-1}\frac{\gamma}{(1+b)^{\beta-1}}}.
\end{equation}
The summary statistics for $\mathbb{E}[D^{k}1_{\tau_{b}<\tau}]$, $k=1,2$,
for the case $\lambda(u)=(\gamma+\frac{\beta}{1+u})\eta(u)$ with fixed $\gamma=1.0$ and
$\beta=1.5$, $2.0$, $2.5$, $3.0$ and $3.5$ are given in Figure~\ref{ExpDividend}
and Table~\ref{ExpDividendTable}. We can see that the values for the second moments
are higher than the first moments, and $\mathbb{E}[D^{k}1_{\tau_{b}<\tau}]$
is increasing in both $\beta$, which is related to the innovation rate, 
and $u$, which is the initial wealth. That is consistent with the intuition.

\begin{figure}[htb]
\centering
\subfigure[1st moment]{
\includegraphics[scale=0.5]{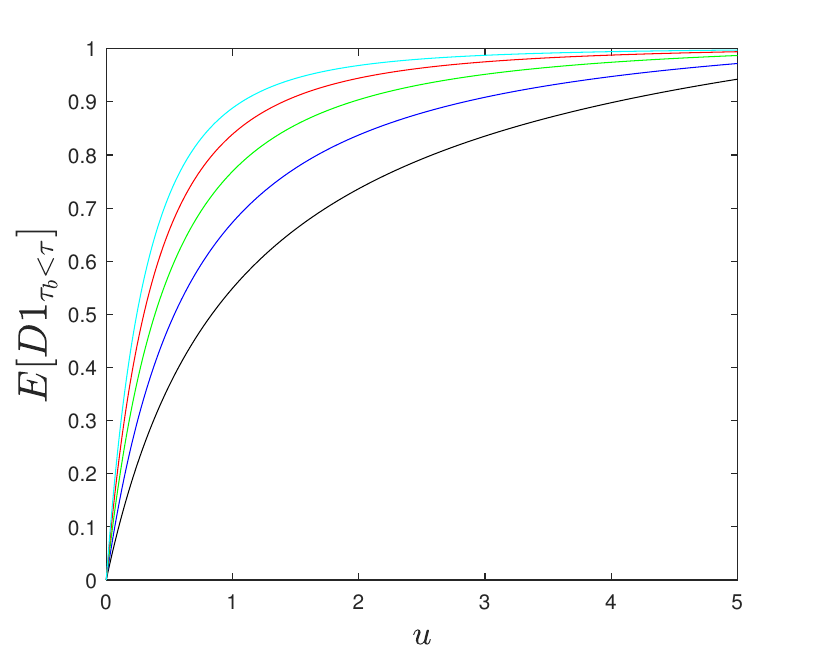}}
\subfigure[2nd moment]{
\includegraphics[scale=0.5]{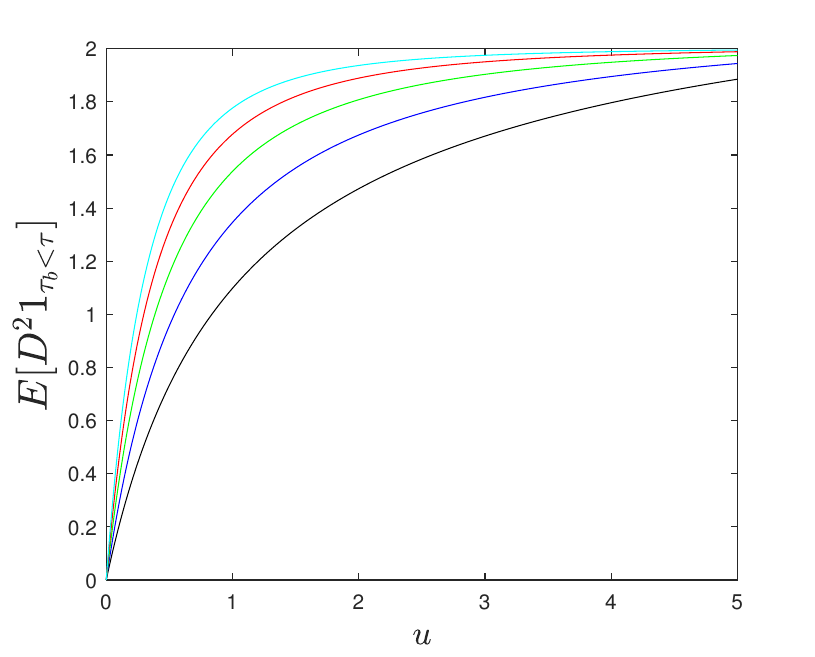}}
\caption{Illustration of $\mathbb{E}[D1_{\tau_{b}<\tau}]$ and $\mathbb{E}[D^{2}1_{\tau_{b}<\tau}]$ against the initial wealth $u$ when
$\lambda(u)=(\gamma+\frac{\beta}{1+u})\eta(u)$. The black, blue, green, red and cion lines denote the cases
when $\beta=1.5$, $2.0$, $2.5$, $3.0$ and $3.5$. The $\gamma$ is fixed to be $1.0$.}
\label{ExpDividend}
\end{figure}

\begin{table}[htb]
\centering 
\begin{tabular}{|c|c|c|c|c|c|} 
\hline 
$\mathbb{E}[D1_{\tau_{b}<\tau}]$ & $u=1$ & $u=2$ & $u=3$ & $u=4$ & $u=5$ 
\\
\hline
$\beta=1.5$ & 0.5491 & 0.7365 & 0.8355 & 0.8984 & 0.9427
\\
$\beta=2.0$ & 0.6731& 0.8376 & 0.9087 & 0.9477 & 0.9722
\\
$\beta=2.5$ & 0.7691 & 0.9041 & 0.9518 & 0.9745 & 0.9873
\\
$\beta=3.0$ & 0.8390 & 0.9447 & 0.9754 & 0.9881 & 0.9944
\\
$\beta=3.5$ & 0.8883 & 0.9685 & 0.9877 & 0.9945 & 0.9976
\\
\hline 
\hline 
$\mathbb{E}[D^{2}1_{\tau_{b}<\tau}]$ & $u=1$ & $u=2$ & $u=3$ & $u=4$ & $u=5$ 
\\
\hline
$\beta=1.5$ & 1.0982 & 1.4731 & 1.6711 & 1.7969 & 1.8854
\\
$\beta=2.0$ & 1.3462 & 1.6752 & 1.8173 & 1.8954 & 1.9444
\\
$\beta=2.5$ & 1.5382 & 1.8081 & 1.9036 & 1.9491 & 1.9746
\\
$\beta=3.0$ & 1.6781 & 1.8894 & 1.9508 & 1.9761 & 1.9888
\\
$\beta=3.5$ & 1.7766 & 1.9371 & 1.9753 & 1.9891 & 1.9952
\\
\hline 
\end{tabular}
\caption{Illustration of the first two moments of dividend payment $\mathbb{E}[D^{k}1_{\tau_{b}<\tau}]$, $k=1,2$, 
when $\lambda(u)=(\gamma+\frac{\beta}{1+u})\eta(u)$, for some constant $\beta>1$ for fixed $\gamma=1.0$ and $b=6.0$ and varying $\beta$ and $u$.}
\label{ExpDividendTable} 
\end{table}

\section*{Acknowledgements}
The author is grateful to an anonymous referee for helpful comments and suggestions
that have greatly improved the quality of the paper.
Lingjiong Zhu is grateful to the support from the National Science Foundation via the award
NSF-DMS-1613164.

\appendix

\section{Proofs}\label{ProofSection}

\begin{proof}[Proof of Theorem~\ref{RuinThm}]
From \eqref{UDynamics}, the infinitesimal generator of the wealth process $U_{t}$ \eqref{UDynamics} can be written as
\begin{equation}
\mathcal{A}f(u)=-\eta(u)\frac{\partial f}{\partial u}
+\lambda(u)\int_{0}^{\infty}[f(u+c)-f(u)]\gamma e^{-\gamma c}dc.
\end{equation}
Let us find a function $f$ such that $\mathcal{A}f=0$, which is equivalent to
\begin{equation}\label{diffIII}
-\eta(u)f'(u)-\lambda(u)f(u)+\lambda(u)\int_{u}^{\infty}f(c)\gamma e^{-\gamma(c-u)}dc=0.
\end{equation}
Differentiating \eqref{diffIII} with respect to $u$, we get
\begin{align}\label{diffIV}
&-\eta'(u)f'(u)-\eta(u)f''(u)-\lambda'(u)f(u)-\lambda(u)f'(u)
+\lambda'(u)\int_{u}^{\infty}f(c)\gamma e^{-\gamma(c-u)}dc
\\
&\qquad\qquad
-\lambda(u)\gamma f(u)
+\lambda(u)\gamma\int_{u}^{\infty}f(c)\gamma e^{-\gamma(c-u)}dc=0.
\nonumber
\end{align}
Substituting \eqref{diffIII} into \eqref{diffIV}, we get
\begin{equation}
\eta(u)f''(u)=\left[\frac{\lambda'(u)}{\lambda(u)}\eta(u)+\gamma\eta(u)-\eta'(u)-\lambda(u)\right]f'(u).
\end{equation}
By letting $f'(u)=g(u)$, we have
\begin{equation}\label{gEqnII}
\frac{dg}{g}=\left(\frac{\lambda'(u)}{\lambda(u)}+\gamma-\frac{\eta'(u)}{\eta(u)}-\frac{\lambda(u)}{\eta(u)}\right)du,
\end{equation}
which implies that
\begin{equation}
g(u)=\frac{\lambda(u)}{\eta(u)}e^{\gamma u-\int_{0}^{u}\frac{\lambda(v)}{\eta(v)}dv}
\end{equation}
is a particular solution to \eqref{gEqnII}. Hence, $\mathcal{A}f(u)=0$ for
\begin{equation}
f(u):=\int_{0}^{u}\frac{\lambda(v)}{\eta(v)}e^{\gamma v-\int_{0}^{v}\frac{\lambda(w)}{\eta(w)}dw}dv.
\end{equation}
By our assumption, $f(\infty)$ exists and is finite and it is also clear
that for any $0\leq u\leq\infty$, $0\leq f(u)\leq f(\infty)<\infty$.
Hence, by optional stopping theorem (see e.g. \cite{Durrett}),
\begin{equation}
f(u)=\mathbb{E}_{u}[f(U_{\tau})]=f(0)\mathbb{P}(\tau<\infty)+f(\infty)\mathbb{P}(\tau=\infty),
\end{equation}
which implies that
\begin{equation}
\psi(u)=\mathbb{P}(\tau<\infty)=\frac{\int_{u}^{\infty}\frac{\lambda(v)}{\eta(v)}e^{\gamma v-\int_{0}^{v}\frac{\lambda(w)}{\eta(w)}dw}dv}
{\int_{0}^{\infty}\frac{\lambda(v)}{\eta(v)}e^{\gamma v-\int_{0}^{v}\frac{\lambda(w)}{\eta(w)}dw}dv}.
\end{equation}
This completes the proof.
\end{proof}


\begin{proof}[Proof of Theorem~\ref{DividendThm}]
Let us recall from the proof of Theorem~\ref{RuinThm} that for 
\begin{equation}
f(u)=\int_{0}^{u}\frac{\lambda(v)}{\eta(v)}e^{\gamma v-\int_{0}^{v}\frac{\lambda(w)}{\eta(w)}dw}dv,
\end{equation}
we have $\mathcal{A}f=0$ and by our assumption $f$ is uniformly bounded, 
where $\mathcal{A}$ is the infinitesimal generator of the wealth process $U_{t}$ \eqref{UDynamics}.
By optional stopping theorem (see e.g. \cite{Durrett}),
\begin{align}
f(u)&=\mathbb{E}_{u}[f(U_{\tau\wedge\tau_{b}})]
\\
&=f(0)\mathbb{P}(\tau<\tau_{b})+\int_{0}^{\infty}f(b+c)\gamma e^{-\gamma c}dc\mathbb{P}(\tau_{b}<\tau),
\nonumber
\end{align}
which implies that for any $k\in\mathbb{N}$
\begin{align}
\mathbb{E}[D^{k}1_{\tau_{b}<\tau}]&=\frac{k!}{\gamma^{k}}\mathbb{P}(\tau_{b}<\tau)
\\
&=\frac{k!}{\gamma^{k}}\frac{f(u)-f(0)}{\int_{0}^{\infty}f(b+c)\gamma e^{-\gamma c}dc-f(0)}
\nonumber
\\
&=\frac{k!}{\gamma^{k}}\frac{\int_{0}^{u}\frac{\lambda(v)}{\eta(v)}e^{\gamma v-\int_{0}^{v}\frac{\lambda(w)}{\eta(w)}dw}dv}
{\int_{0}^{\infty}\gamma e^{-\gamma c}\int_{0}^{b+c}\frac{\lambda(v)}{\eta(v)}
e^{\gamma v-\int_{0}^{v}\frac{\lambda(w)}{\eta(w)}dw}dvdc}.
\nonumber
\end{align}
The proof is complete.
\end{proof}


\begin{proof}[Proof of Proposition~\ref{FirstSecondProp}]
The infinitesimal generator of $U_{t}$ process is given by
\begin{equation}
\mathcal{A}f(u)=-(\rho+\mu u)f'(u)+(\alpha+\beta u)\int_{0}^{\infty}[f(u+c)-f(u)]Q(dc).
\end{equation}
Let $f(u)=u$, we get $\mathcal{A}u=-\rho-\mu u+\mathbb{E}^{Q}[c](\alpha+\beta u)$.
By Dynkin's formula (see e.g. \cite{Oksendal}),
\begin{align}
\mathbb{E}[U_{t}]
&=u+\int_{0}^{t}\mathbb{E}\left[-\rho-\mu U_{s}+\mathbb{E}^{Q}[c](\alpha+\beta U_{s})\right]ds
\\
&=u+(-\rho+\alpha\mathbb{E}^{Q}[c])t+(\beta\mathbb{E}^{Q}[c]-\mu)\int_{0}^{t}\mathbb{E}[U_{s}]ds,
\nonumber
\end{align}
which yields that
\begin{equation}\label{MeanSol}
\mathbb{E}[U_{t}]=\left(\frac{\rho-\alpha\mathbb{E}^{Q}[c]}{\mu-\beta\mathbb{E}^{Q}[c]}+u\right)e^{-(\mu-\beta\mathbb{E}^{Q}[c])t}
-\frac{\rho-\alpha\mathbb{E}^{Q}[c]}{\mu-\beta\mathbb{E}^{Q}[c]}.
\end{equation}
Let $f(u)=u^{2}$, we get $\mathcal{A}u^{2}=\alpha\mathbb{E}^{Q}[c^{2}]+(2(\alpha\mathbb{E}^{Q}[c]-\rho)+\beta\mathbb{E}[c^{2}])u
+2(\beta\mathbb{E}[c]-\mu)u^{2}$. By Dynkin's formula (see e.g. \cite{Oksendal}),
\begin{equation}
\mathbb{E}[U_{t}^{2}]
=u^{2}+\alpha\mathbb{E}^{Q}[c^{2}]t+(2(\alpha\mathbb{E}^{Q}[c]-\rho)+\beta\mathbb{E}[c^{2}])\int_{0}^{t}\mathbb{E}[U_{s}]ds
+2(\beta\mathbb{E}[c]-\mu)\int_{0}^{t}\mathbb{E}[U_{s}^{2}]ds,
\end{equation}
which implies the ODE:
\begin{equation}
\frac{d}{dt}\mathbb{E}[U_{t}^{2}]
=\alpha\mathbb{E}^{Q}[c^{2}]+(2(\alpha\mathbb{E}^{Q}[c]-\rho)+\beta\mathbb{E}[c^{2}])\mathbb{E}[U_{t}]
+2(\beta\mathbb{E}[c]-\mu)\mathbb{E}[U_{t}^{2}].
\end{equation}
This is a first-order linear ODE. Together with \eqref{MeanSol}, we get
\begin{align}
\mathbb{E}[U_{t}^{2}]&=ue^{-2\beta(\mathbb{E}^{Q}[c]-\mu)t}
+\int_{0}^{t}e^{-2\beta(\mathbb{E}^{Q}[c]-\mu)(t-s)}
\alpha\mathbb{E}^{Q}[c^{2}]ds
\\
&\qquad
+(2(\alpha\mathbb{E}^{Q}[c]-\rho)+\beta\mathbb{E}[c^{2}])
\int_{0}^{t}e^{-2\beta(\mathbb{E}^{Q}[c]-\mu)(t-s)}
\left(\frac{\rho-\alpha\mathbb{E}^{Q}[c]}{\mu-\beta\mathbb{E}^{Q}[c]}+u\right)e^{-(\mu-\beta\mathbb{E}^{Q}[c])s}ds
\nonumber
\\
&\qquad\qquad
-(2(\alpha\mathbb{E}^{Q}[c]-\rho)+\beta\mathbb{E}[c^{2}])
\int_{0}^{t}e^{-2\beta(\mathbb{E}^{Q}[c]-\mu)(t-s)}
\frac{\rho-\alpha\mathbb{E}^{Q}[c]}{\mu-\beta\mathbb{E}^{Q}[c]}ds
\nonumber
\\
&=
ue^{-2\beta(\mathbb{E}^{Q}[c]-\mu)t}
+\alpha\mathbb{E}^{Q}[c^{2}]
\frac{1-e^{-2\beta(\mathbb{E}^{Q}[c]-\mu)t}}{2\beta(\mathbb{E}^{Q}[c]-\mu)}
\nonumber
\\
&\qquad
-(2(\alpha\mathbb{E}^{Q}[c]-\rho)+\beta\mathbb{E}[c^{2}])\frac{\rho-\alpha\mathbb{E}^{Q}[c]}{\mu-\beta\mathbb{E}^{Q}[c]}
\frac{1-e^{-2\beta(\mathbb{E}^{Q}[c]-\mu)t}}{2\beta(\mathbb{E}^{Q}[c]-\mu)}
\nonumber
\\
&\qquad
+(2(\alpha\mathbb{E}^{Q}[c]-\rho)+\beta\mathbb{E}[c^{2}])\left(\frac{\rho-\alpha\mathbb{E}^{Q}[c]}{\mu-\beta\mathbb{E}^{Q}[c]}+u\right)
\frac{e^{-\beta(\mathbb{E}^{Q}[c]-\mu)t}-e^{-2\beta(\mathbb{E}^{Q}[c]-\mu)t}}{\beta(\mathbb{E}^{Q}[c]-\mu)}.
\nonumber
\end{align}
This completes the proof.
\end{proof}


\begin{proof}[Proof of Theorem~\ref{LaplaceThm}]
Assume that we have a uniformly bounded positive function $f$ such that $\mathcal{A}f=\delta f$.
Note that $\frac{f(U_{t})}{f(U_{0})}e^{-\int_{0}^{t}\frac{\mathcal{A}f}{f}(U_{s})ds}
=\frac{f(U_{t})}{f(U_{0})}e^{-\delta t}$ is a martingale.
By optional stopping theorem (see e.g. \cite{Durrett}), 
\begin{equation}
1=\mathbb{E}\left[\frac{f(U_{\tau})}{f(U_{0})}e^{-\int_{0}^{\tau}\frac{\mathcal{A}f}{f}(U_{s})ds}\right]
=\frac{f(0)}{f(u)}\mathbb{E}\left[e^{-\delta\tau}1_{\tau<\infty}\right]. 
\end{equation}
Therefore, 
\begin{equation}
\psi(u,\delta)=f(u)/f(0).
\end{equation}

Let us now try to find a function $f$ such that $\mathcal{A}f=\delta f$.
Note that $\mathcal{A}f=\delta f$ is equivalent to
\begin{equation}
-\eta(u)f'(u)-\lambda(u)f(u)+\lambda(u)\int_{0}^{\infty}f(u+c)\gamma e^{-\gamma c}dc=\delta f,
\end{equation}
which implies that
\begin{align}
&\lambda(u)\eta(u)f''(u)
+[\lambda(u)\eta'(u)+\lambda^{2}(u)-\gamma\eta(u)\lambda(u)-\lambda'(u)\eta(u)+\delta\lambda(u)]f'(u)
\\
&\qquad\qquad\qquad\qquad
-(\gamma\lambda(u)+\lambda'(u))\delta f(u)=0.
\nonumber
\end{align}
This completes the proof.
\end{proof}


\begin{proof}[Proof of Theorem~\ref{ExpectedThm}]
Recall that the infinitesimal generator of $U_{t}$ process is given by
\begin{equation}
\mathcal{A}f(u)=-\eta(u)\frac{\partial f}{\partial u}
+\lambda(u)\int_{0}^{\infty}[f(u+c)-f(u)]\gamma e^{-\gamma c}dc.
\end{equation}
Let us find a function $f$ such that $\mathcal{A}f=-1$. That is,
\begin{equation}\label{subI}
-\eta(u)f'(u)-\lambda(u)f(u)
+\lambda(u)\int_{u}^{\infty}f(c)\gamma e^{-\gamma(c-u)}dc=-1.
\end{equation}
Differentiating the equation \eqref{subI} with respect to $u$, we get
\begin{align}\label{subII}
&-\eta'(u)f'(u)-\eta(u)f''(u)-\lambda'(u)f(u)-\lambda(u)f'(u)
+\lambda'(u)\int_{u}^{\infty}f(c)\gamma e^{-\gamma(c-u)}dc
\\
&\qquad\qquad
-\lambda(u)\gamma f(u)+\lambda(u)\gamma\int_{u}^{\infty}f(c)\gamma e^{-\gamma(c-u)}dc=0.
\nonumber
\end{align}
Substituting \eqref{subI} into \eqref{subII}, we get
\begin{equation}
f''(u)+\left[\frac{\eta'(u)}{\eta(u)}+\frac{\lambda(u)}{\eta(u)}-\frac{\lambda'(u)}{\lambda(u)}-\gamma\right]f'(u)
=-\left[\frac{\lambda'(u)}{\lambda(u)}+\gamma\right]\frac{1}{\eta(u)}.
\end{equation}
Let $g(u)=f'(u)$, then $g(u)$ satisfies a first-order linear ODE:
\begin{equation}
g'(u)+\left[\frac{\eta'(u)}{\eta(u)}+\frac{\lambda(u)}{\eta(u)}-\frac{\lambda'(u)}{\lambda(u)}-\gamma\right]g(u)
=-\left[\frac{\lambda'(u)}{\lambda(u)}+\gamma\right]\frac{1}{\eta(u)},
\end{equation}
which yields the general solution
\begin{align}
g(u)&=\int_{0}^{u}\left[-\frac{\lambda'(v)}{\lambda(v)}-\gamma\right]\frac{1}{\eta(v)}
e^{-\int_{v}^{u}\left[\frac{\eta'(w)}{\eta(w)}+\frac{\lambda(w)}{\eta(w)}-\frac{\lambda'(w)}{\lambda(w)}-\gamma\right]dw}dv
\\
&\qquad\qquad\qquad
+g(0)e^{-\int_{0}^{u}\left[\frac{\eta'(v)}{\eta(v)}+\frac{\lambda(v)}{\eta(v)}-\frac{\lambda'(v)}{\lambda(v)}-\gamma\right]dv}
\nonumber
\\
&=\frac{\lambda(u)}{\eta(u)}
\int_{0}^{u}[-\lambda'(v)-\gamma\lambda(v)]\frac{1}{\lambda(v)^{2}}
e^{\gamma(u-v)-\int_{v}^{u}\frac{\lambda(w)}{\eta(w)}dw}dv
\nonumber
\\
&\qquad\qquad\qquad
+g(0)\frac{\eta(0)}{\eta(u)}\frac{\lambda(u)}{\lambda(0)}e^{\gamma u}
e^{-\int_{0}^{u}\frac{\lambda(v)}{\eta(v)}dv}.
\nonumber
\end{align}
Hence, we can choose $f(u)=\int_{0}^{u}g(v)dv$, which gives
\begin{align}
f(u)&=\int_{0}^{u}\frac{\lambda(v)}{\eta(v)}
\int_{0}^{v}[-\lambda'(w)-\gamma\lambda(w)]\frac{1}{\lambda(w)^{2}}
e^{\gamma(v-w)-\int_{w}^{v}\frac{\lambda(r)}{\eta(r)}dr}dwdv
\\
&\qquad
+g(0)\int_{0}^{u}\frac{\eta(0)}{\eta(v)}\frac{\lambda(v)}{\lambda(0)}e^{\gamma v}
e^{-\int_{0}^{v}\frac{\lambda(w)}{\eta(w)}dw}dv.
\nonumber
\end{align}
Next, let us determine $g(0)$. Recall that $g(0)=f'(0)$ and also notice that $f(0)=0$. 
Note that by letting $u=0$ in \eqref{subI}, we get
\begin{equation}
-\eta(0)g(0)
+\lambda(0)\int_{0}^{\infty}f(c)\gamma e^{-\gamma(c-u)}dc=-1,
\end{equation}
which implies that
\begin{align}
-1&=-\eta(0)g(0)+\lambda(0)\int_{0}^{\infty}\int_{0}^{c}\frac{\lambda(v)}{\eta(v)}
\int_{0}^{v}[-\lambda'(w)-\gamma\lambda(w)]\frac{1}{\lambda(w)^{2}}
\\
&\qquad\qquad\qquad\cdot
e^{\gamma(v-w)-\int_{w}^{v}\frac{\lambda(r)}{\eta(r)}dr}\gamma e^{-\gamma(c-u)}dwdvdc
\nonumber
\\
&\qquad\qquad
+g(0)\lambda(0)\int_{0}^{\infty}\int_{0}^{c}\frac{\eta(0)}{\eta(v)}\frac{\lambda(v)}{\lambda(0)}e^{\gamma v}
e^{-\int_{0}^{v}\frac{\lambda(w)}{\eta(w)}dw}\gamma e^{-\gamma(c-u)}dvdc.
\nonumber
\end{align}
Therefore, 
\begin{equation}
g(0)=\frac{1+\lambda(0)\int_{0}^{\infty}\int_{0}^{c}\frac{\lambda(v)}{\eta(v)}
\int_{0}^{v}[-\lambda'(w)-\gamma\lambda(w)]\frac{1}{\lambda(w)^{2}}
e^{\gamma(v-w)-\int_{w}^{v}\frac{\lambda(r)}{\eta(r)}dr}\gamma e^{-\gamma(c-u)}dwdvdc}
{\eta(0)-\lambda(0)\int_{0}^{\infty}\int_{0}^{c}\frac{\eta(0)}{\eta(v)}\frac{\lambda(v)}{\lambda(0)}e^{\gamma v}
e^{-\int_{0}^{v}\frac{\lambda(w)}{\eta(w)}dw}\gamma e^{-\gamma(c-u)}dvdc}.
\end{equation}

By Dynkin's formula (see e.g. \cite{Oksendal}), for any $K>0$,  
\begin{equation}
\mathbb{E}[f(U_{\tau\wedge K})]=f(u)+\mathbb{E}\left[\int_{0}^{\tau\wedge K}\mathcal{A}f(U_{s})ds\right]
=f(u)-\mathbb{E}[\tau\wedge K].
\end{equation}
By our assumption $\sup_{0<u<\infty}f(u)<\infty$ and $\tau<\infty$ a.s. Hence
as $K\rightarrow\infty$, from bounded convergence theorem, 
we have $\mathbb{E}[f(U_{\tau\wedge K})]\rightarrow\mathbb{E}[f(U_{\tau})]$
and by monotone convergence theorem, $\mathbb{E}[\tau\wedge K]\rightarrow\mathbb{E}[\tau]$.
Therefore, $\mathbb{E}[\tau]=f(u)-\mathbb{E}[f(U_{\tau})]$,
which implies that
\begin{align}
\mathbb{E}[\tau]&=\int_{0}^{u}\frac{\lambda(v)}{\eta(v)}
\int_{0}^{v}[-\lambda'(w)-\gamma\lambda(w)]\frac{1}{\lambda(w)^{2}}
e^{\gamma(v-w)-\int_{w}^{v}\frac{\lambda(r)}{\eta(r)}dr}dwdv
\\
&\qquad
+g(0)\int_{0}^{u}\frac{\eta(0)}{\eta(v)}\frac{\lambda(v)}{\lambda(0)}e^{\gamma v}
e^{-\int_{0}^{v}\frac{\lambda(w)}{\eta(w)}dw}dv.
\nonumber
\end{align}
where $g(0)$ is defined in \eqref{g0}. The proof is complete.
\end{proof}

\end{document}